\newcommand{\coloredcomment}[3]{}
\newcommand{\deadcomment}[1]{}
\newtheorem{theorem}{Theorem}[section]
\newtheorem{lemma}[theorem]{Lemma}
\newtheorem{corollary}[theorem]{Corollary}
\date{}
\author{Petros Maniatis\qquad Ilya Mironov\qquad Kunal Talwar\\
Google Brain
}
\begin{document}

\title{Oblivious Stash Shuffle}

\maketitle

\begin{abstract}
This is a companion report to Bittau et al.~\cite{StashShuffler}. We restate and prove security of the Stash Shuffle.
\end{abstract}


\section{Description of the Stash Shuffle}
\begin{algorithm}[H]
\caption{The Stash Shuffle algorithm.\label{alg:StashShuffle}}
\footnotesize
\begin{algorithmic}[1]
\Procedure{StashShuffle}{Untrusted arrays $\textit{in}, \textit{out}, \textit{mid}$}
\State $\textit{stash} \gets \phi$\label{code:shuffle:distribute:1}
\For{$j \gets 0, B-1$}
\State $\textsc{DistributeBucket}(\textit{stash}, j, \textit{in}, \textit{mid})$
\EndFor
\State $\textsc{DrainStash}(\textit{stash}, B, \textit{mid})$\label{code:shuffle:drain}
\State $\textbf{FAIL}\ \textbf{on}\ \neg \textit{stash}.\textit{Empty}()$\label{code:shuffle:distribute:2}
\State $\textsc{Compress}(\textit{mid}, \textit{out})$\label{code:shuffle:compress}
\EndProcedure
\end{algorithmic}
\end{algorithm}\vspace*{-1ex}

Algorithm~\ref{alg:StashShuffle}, the Stash Shuffle, considers input ($\textit{in}$) and output ($\textit{out}$) items in $B$ sequential buckets, each holding at most $D \triangleq \lceil N/B\rceil$ items, sized to fit in private
memory. At a high level, the algorithm first chooses a random output bucket for
each input item, and then randomly shuffles each output bucket. It does that in two phases. During the
\emph{Distribution Phase} (lines~\ref{code:shuffle:distribute:1}--\ref{code:shuffle:distribute:2}), it reads in one input bucket at a time,
splits it across output buckets, and stores the split-up but as yet unshuffled, re-encrypted items in an
intermediate array ($\textit{mid}$) in untrusted memory.  During the 
\emph{Compression Phase} (line~\ref{code:shuffle:compress}), it reads the intermediate array of encrypted
items one bucket at a time, shuffles each bucket, and stores it
fully-shuffled in the output array.

The algorithm gets its name from the \emph{stash}, a private structure, whose
purpose is to reconcile
obliviousness with
the variability in item counts distributed across the output buckets.
This variability (an inherent result of balls-and-bins properties)
must be hidden from external observers,
and not reflected in non-private memory.
For this,
the algorithm caps the number traveling from an input bucket to an output bucket at
$C \triangleq D/B+\alpha\sqrt{D/B}$ for a
small constant $\alpha$.
If any input bucket
distributes more than $C$ items to an output bucket,
overflow items are instead stored in a stash---of size $S$---where
they queue, waiting to be
drained into the chosen output bucket during processing of later input buckets.

\begin{algorithm}[H]
\caption{Distribute one input bucket.\label{alg:Distribution}}
\footnotesize
\begin{algorithmic}[1]
\Procedure{DistributeBucket}{$\textit{stash}$, $b$, Untrusted arrays $\textit{in}$, $\textit{mid}$}
\State $\textit{output} \gets \phi$
\State$\textit{targets} \gets \textsc{ShuffleToBuckets}(B, D)$\label{code:distr:targets} 
\For{$j \gets 0, B-1$}\label{code:distr:empty_stash:1}
\While{$\neg \textit{output}[j].\textit{Full}() \wedge \neg \textit{stash}[j].\textit{Empty}() $}
\State $\textit{output}[j].\textit{Push}(\textit{stash}[j].\textit{Pop}())$\label{code:distr:empty_stash:2}
\EndWhile
\EndFor
\For{$i \gets 0, D-1$}\label{code:distr:bucket:1}
\State $\textit{item} \gets \textit{Decrypt}(\textit{in}[\textit{DataIdx}(b, i)])$

\If{$\neg \textit{output}[targets[i]].\textit{Full}()$}
\State $\textit{output}[targets[i]].\textit{Push}(\textit{item})$\label{code:distr:output_chunk}
\Else
\If{$\neg \textit{stash}.\textit{Full}()$}
\State $\textit{stash}[targets[i]].\textit{Push}(\textit{item})$
\Else
\State $\textbf{FAIL}$\label{code:distr:bucket:fail}
\EndIf
\EndIf\label{code:distr:bucket:2}

\EndFor
\For{$j \gets 0, B-1$}\label{code:distr:output:1}
\While{$\neg \textit{output}[j].\textit{Full}()$}
\State $\textit{output}[j].\textit{Push}(\textit{dummy})$
\EndWhile
\For{$i \gets 0, C-1$}
\State $\textit{mid}[\textit{MidIdx}(j, i)] \gets \textit{Encrypt}(\textit{output}[j][i])$\label{code:distr:output:2}
\EndFor
\EndFor
\EndProcedure
\end{algorithmic}
\end{algorithm}\vspace*{-1ex}

Algorithm~\ref{alg:Distribution} describes the distribution in more
detail, implementing the same logic, but reducing data copies.
$\textsc{ShuffleToBuckets}$ randomly shuffles the $D$ items of the input bucket, and $B-1$ bucket separators. The
shuffle determines which item will fall into which target bucket, stored in
$\textit{targets}$ (line~\ref{code:distr:targets}).  Then, for every output
bucket, as long as there is still room in the maximum $C$ items to output, and there are stashed away items, the output takes items from
the stash
(lines~\ref{code:distr:empty_stash:1}--\ref{code:distr:empty_stash:2}). Then the
input bucket items are read in from the outside input array, decrypted, and
deposited either in the output (if there is still room in the quota $C$ of the
target bucket), or in the stash
(lines~\ref{code:distr:bucket:1}--\ref{code:distr:bucket:2}).  Finally, if some
output chunks are still not up to the $C$ quota, they are filled with dummy
items, encrypted
and written out into the intermediate array
(lines~\ref{code:distr:output:1}--\ref{code:distr:output:2}). Note that the
stash may end up with items left over after all input buckets have been
processed, so we drain those items (padding with dummies), filling $K$ extra
items per output bucket at the end of the distribution phase
(line~\ref{code:shuffle:drain} of Algorithm~\ref{alg:StashShuffle}, which is
similar to distributing a bucket, except there is no input bucket to
distribute). $K$ is set to $S/B$, that is, the size of the stash divided by the number of buckets.

\begin{algorithm}[H]
  \caption{Compress intermediate items.\\\footnotesize ($\textit{L} \triangleq \min(W, B)$ is the
  effective window size, defined to account for pathological cases where $W >
  B$.) \label{alg:Compress}}
\footnotesize
\begin{algorithmic}[1]
\Procedure{Compress}{Untrusted arrays $\textit{mid}, \textit{out}$}
\For{$b \gets 0, L-1$}
\State $\textsc{ImportIntermediate}(b, \textit{mid})$
\EndFor
\For{$b \gets L, B-1$}
\State $\textsc{DrainQueue}(b-L, \textit{mid}, \textit{out})$
\State $\textsc{ImportIntermediate}(b, \textit{mid})$
\EndFor
\For{$b \gets B-L, B-1$}
\State $\textsc{DrainQueue}(b, \textit{mid})$
\EndFor
\EndProcedure
\end{algorithmic}
\end{algorithm}\vspace*{-1ex}

Algorithm~\ref{alg:Compress} shows the compression phase.
In this phase, the intermediate items deposited by the distribution
phase must be shuffled, and dummy items must be filtered out. To do this,
without revealing information about the distribution of (real) items in output
buckets, the phase proceeds in a sliding window of $W$ buckets of intermediate
items. The window size $W$ is meant to absorb the elasticity of real item counts
in each intermediate output bucket due to the Binomial distribution. See

\begin{algorithm}
\caption{Import an intermediate bucket.\label{alg:ImportIntermediate}}
\footnotesize
\begin{algorithmic}[1]
\Procedure{ImportIntermediate}{$\textit{b}$, Untrusted array $\textit{mid}$}
\State $\textit{bucket} \gets \textit{mid}[\textit{MidIdx}(b, 0..C*B+K-1)]$\label{code:import:load}
\State $\textit{Shuffle}(bucket)$\label{code:import:shuffle}
\For{$i \gets 0, C*B+K-1$}
\State $\textit{item} \gets \textit{Decrypt}(\textit{bucket}[i])$
\If{$\neg \textit{item}.dummy$}
\State $\textit{queue}.\textit{Enqueue}(\textit{item})$
\EndIf
\EndFor
\EndProcedure
\end{algorithmic}
\end{algorithm}\vspace*{-1ex}

As Algorithm~\ref{alg:ImportIntermediate} shows, an intermediate bucket is loaded into
private memory ($C$ items per input bucket, plus another $K$ items for the final
stash drain) in line~\ref{code:import:load}, and shuffled in line~\ref{code:import:shuffle}. Then 
intermediate items are decrypted, throwing away dummies, and enqueued for export,
$D$ items at a time, into the
output array in untrusted memory. 

The distribution step is constrained by the content of a single bucket $D$ and the stash size (although the latter may be organized on a per-bucket basis, only $S/B$ items must be kept in memory at any one time). The compression step requires keeping $CB+K$ items in memory for the freshly loaded bucket, $D(W-1)$ items for the buckets previously processed, and $Q$ items as a hedge against overflow. We discuss constraints on these parameters in the next section and offer some concrete choices in Section~\ref{s:parameters}.

\section{Security Argument}

This section is dedicated to proving our main result, Theorem~\ref{th:main}, namely that the output of the Stash Shuffle is statistically close to the uniform distribution on permutations on $N$ items. The shuffle's obliviousness is established by inspection.

Our proof follows the following steps. First, we define the Buckets Shuffle---an idealized and simplified version of the Stash Shuffle, which operates over unbounded data structures in clear text and never fails. We demonstrate that the output of the Buckets Shuffle is truly uniform. Second, we argue that the output of the real algorithm deviates from the Buckets Shuffle only when it fails, thus bounding the statistical distance between the Stash Shuffle's output distribution and the uniform distribution. Finally, we bound the probability of failure of the Stash Shuffle, and select its parameters so that the probability is negligible in $N$.

\begin{algorithm}
	\caption{The Buckets Shuffle algorithm.\label{alg:BucketsShuffle}}
	\begin{algorithmic}[1]
		\Procedure{DistrBucketIdeal}{$b$, arrays $\textit{in}, \textit{mid}$}
		\State$\textit{targets} \gets \textsc{ShuffleToBuckets}(B, D)$\label{code:distr:shuffle}\Comment{Same as before}
		\For{$i \gets 0, D-1$}
		\State $\textit{item} \gets \textit{in}[\textit{DataIdx}(b, i)]$
		\State $\textit{mid}[\textit{targets}[i]].\textit{Push}(\textit{item})$
		\EndFor
		\EndProcedure
		\Statex
		\Procedure{CompressIdeal}{array $\textit{mid}$, list $\textit{out}$}
		\For{$i \gets 0, B-1$}
		\State $\textit{bucket} \gets \textit{mid}[i]$
		\State $\textit{Shuffle}(\textit{bucket})$\label{code:compress:shuffle}
		\State $\textit{out}.\textit{Append}(\textit{bucket})$
		\EndFor
		\EndProcedure
		\Statex
		\Procedure{Shuffle}{arrays $\textit{in}, \textit{out}, \textit{mid}$}
		\For{$j \gets 0, B-1$}\label{code:shuffle:for1}
		\State $\textsc{DistrBucketIdeal}(j, \textit{in}, \textit{mid})$\label{code:shuffle:for2}
		\EndFor
		\State $\textsc{CompressIdeal}(\textit{mid}, \textit{out})$
		\EndProcedure
	\end{algorithmic}
\end{algorithm}

\begin{lemma}\label{lemma:BS-uniform}The output of the Buckets Shuffle (the \textsc{Shuffle} procedure, Algorithm~\ref{alg:BucketsShuffle}) is uniform.
\end{lemma}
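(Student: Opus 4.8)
The goal is to show that the Buckets Shuffle outputs a uniformly random permutation of the $N$ input items. The plan is to track the joint distribution of (a) the assignment of input items to output buckets, and (b) the order within each output bucket, and argue that conditioned on any particular bucket-size profile, every consistent permutation is equally likely — and that summing over profiles still yields the uniform distribution.

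First, I would set up notation: the final output list $\textit{out}$ is the concatenation of the $B$ shuffled buckets $\textit{mid}[0], \dots, \textit{mid}[B-1]$, where $\textit{mid}[i]$ receives, from each input bucket $b$, those items whose $\textsc{ShuffleToBuckets}$ target equals $i$. I would first record the key combinatorial fact about $\textsc{ShuffleToBuckets}(B,D)$: shuffling $D$ items together with $B-1$ separators and reading off the gaps produces a uniformly random ordered set-partition of the $D$ items into $B$ (ordered) groups — equivalently, each item independently lands in a bucket and, conditioned on the multiset of group sizes, the assignment of which items go where is uniform, and the relative order of items sent to the same bucket is itself uniformly random. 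Because the $B$ calls to $\textsc{DistrBucketIdeal}$ (one per input bucket) use independent randomness, the assignment of all $N$ items to the $B$ output buckets is distributed exactly as $N$ i.i.d.\ uniform choices over $[B]$, and, conditioned on that assignment, the pre-shuffle order inside each output bucket $\textit{mid}[i]$ is a uniformly random interleaving of the independently-uniformly-ordered contributions from each input bucket.

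Next, the $\textsc{CompressIdeal}$ step applies an independent uniform $\textit{Shuffle}$ to each $\textit{mid}[i]$. This is the crucial simplification: whatever the pre-shuffle order and however correlated it is with the bucket assignment, a fresh uniform permutation of $\textit{mid}[i]$ makes the final order within bucket $i$ uniform over all orderings of its contents, independently across buckets and independent of the assignment. So, conditioned on the assignment $f : [N] \to [B]$ of items to buckets, the output $\textit{out}$ is: the items with $f = 0$ in uniform random order, followed by those with $f = 1$ in uniform random order, and so on. I would then verify that, for any target permutation $\pi$ of the $N$ items, the probability that $\textit{out} = \pi$ equals the probability that the assignment $f$ is the unique one compatible with $\pi$ (namely, the first $n_0$ positions of $\pi$ all map to bucket $0$, the next $n_1$ to bucket $1$, etc., where $n_i = |f^{-1}(i)|$) times $\prod_i 1/n_i!$. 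Since $f$ is $N$ i.i.d.\ uniform choices, $\Pr[f] = B^{-N}$ regardless of the profile $(n_0, \dots, n_{B-1})$, so $\Pr[\textit{out} = \pi] = B^{-N} \sum_{(n_i)} \mathbf{1}[\sum n_i = N] \prod_i 1/n_i!$ — wait, more carefully: for a fixed $\pi$ the compatible $f$ and hence the profile are determined, so $\Pr[\textit{out}=\pi] = B^{-N} \prod_i 1/n_i!$, and I must then check this is the same for every $\pi$. It is not literally constant in $\pi$ as written, which signals that the right bookkeeping is to sum over all $f$ consistent with a given relative order — so instead I would compute $\Pr[\textit{out} = \pi]$ by summing over which contiguous blocks of $\pi$ came from which buckets; a cleaner route is to argue directly that $\textit{out}$ is exchangeable and that any two permutations have equal probability by exhibiting a measure-preserving bijection on the randomness.

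The main obstacle — and the step I would spend the most care on — is precisely this last counting argument: getting the combinatorics of ``assignment profile $+$ within-bucket order $\Rightarrow$ uniform permutation'' exactly right, since the naive per-$\pi$ probability expression is profile-dependent and one must correctly account for the fact that many different bucket-size profiles (and item-to-bucket assignments) are consistent with a single final permutation once we only observe $\textit{out}$. The cleanest fix is to condition on the profile $(n_0,\dots,n_{B-1})$: given the profile, the number of assignments is the multinomial $\binom{N}{n_0,\dots,n_{B-1}}$ each with probability $B^{-N}$, and given the profile the output is uniform over the $\binom{N}{n_0,\dots,n_{B-1}} \prod_i n_i!/ \,=\, N!/\prod_i n_i!\cdot\prod_i n_i!$— I would instead simply observe that conditioned on the profile, the map (assignment, within-bucket orders) $\to$ output permutation is a bijection onto all $N!$ permutations when we also range over the $\prod n_i!$ internal orders and the multinomial many assignments, since $\binom{N}{n_0,\dots,n_{B-1}} \cdot \prod_i n_i! = N!$; hence conditioned on the profile the output is uniform on all $N!$ permutations, and therefore unconditionally uniform as well. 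Everything else (independence of the $B$ input-bucket distributions, the structural description of $\textsc{ShuffleToBuckets}$, the effect of a fresh $\textit{Shuffle}$) is routine once stated precisely.
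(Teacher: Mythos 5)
Your proof is correct (under the reading of \textsc{ShuffleToBuckets} that the paper itself relies on; see the caveat below), but it takes a genuinely different route from the paper. You argue directly: the assignment of the $N$ items to output buckets is $N$ i.i.d.\ uniform choices; the fresh per-bucket \textit{Shuffle} in \textsc{CompressIdeal} makes the within-bucket order uniform regardless of the pre-shuffle order; and, conditioning on the size profile $(n_0,\dots,n_{B-1})$, the map (assignment, within-bucket orders) $\to$ output permutation is a uniform-measure-preserving bijection onto $S_N$ because $\binom{N}{n_0,\dots,n_{B-1}}\prod_i n_i!=N!$ --- so the output is uniform conditionally on every profile, hence unconditionally. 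The paper instead gives a two-line coupling argument: the permutation the algorithm applies is independent of the \textit{in} array's contents, so a uniformly permuted input yields a uniform output; then it couples a run on an arbitrary input with a run on a uniformly permuted input by forcing each item to receive the same target bucket in both runs, making the two \textit{mid} states (and, after coupling the final shuffles, the outputs) identical. The coupling proof is shorter and dodges all combinatorics; your proof is more self-contained and makes explicit exactly which distributional property is being used, namely that the assignment is uniform conditioned on the profile --- your detour through the ``profile-dependent per-$\pi$ probability'' resolves correctly once you sum over profiles, and the bijection argument you settle on is clean. One caveat worth flagging: your claimed equivalence between ``uniformly shuffle $D$ items with $B-1$ separators'' and ``each item independently lands in a bucket'' is false --- the separator construction makes the profile uniform over compositions (Bose--Einstein statistics) rather than multinomial, and under that literal reading the assignment is \emph{not} uniform given the profile (indeed the lemma itself would fail). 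The i.i.d.\ reading is clearly the intended one, since Lemmas~\ref{lemma:stash-overflows} and~\ref{lemma:queue-fails} model the per-bucket counts as $\mathrm{Bin}(1/B,D)$, and the paper's own coupling step quietly needs the analogous exchangeability of targets under re-partitioning of items into input buckets; so this is an ambiguity you share with the paper rather than a defect of your argument, but your proof makes the dependence on it visible where the paper's does not.
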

\begin{proof} First, observe that the shuffle, i.e., the mapping of its input to the output, is independent of the content of the $\textit{in}$ array. It means, in particular, that if the input is uniformly sampled from the set of all permutations on $N$ elements, the output will be uniformly distributed as well.

Second, we construct a coupling between the shuffle seeded with a uniformly distributed input $\textit{in}_1$ and an arbitrary $\textit{in}_2$ as follows. For each assignment of items from the $b$th bucket of $\textit{in}_1$ output by \textsc{ShuffleToBuckets} (line~\ref{code:distr:shuffle}), we force identical assignment for the same items in $\textit{in}_2$ (possibly from different input buckets). After execution of the lines~\ref{code:shuffle:for1}--\ref{code:shuffle:for2} the internal state of the two runs of the algorithm (the content of \textit{mid}) become identical, from which the claim follows.
\end{proof}

\begin{lemma}\label{lemma:SS-BS}Statistical distance between the distributions of output of the Stash Shuffle and the Buckets Shuffle is bounded by the probability that the Stash Shuffle fails.
\end{lemma}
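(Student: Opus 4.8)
The plan is to couple the execution of the Stash Shuffle with that of the Buckets Shuffle on the same input and the same internal randomness, and show that the two runs produce identical outputs on the event that the Stash Shuffle does not fail. The two algorithms draw their randomness from the same source: the calls to \textsc{ShuffleToBuckets} in \textsc{DistributeBucket} (line~\ref{code:distr:targets}) and \textsc{DistrBucketIdeal} (line~\ref{code:distr:shuffle}), which assign each input item a target output bucket, and the per-bucket shuffles in \textsc{ImportIntermediate} (line~\ref{code:import:shuffle}) and \textsc{CompressIdeal} (line~\ref{code:compress:shuffle}). So I would couple these by using the same random bits: the same target-bucket assignment in both runs, and then I must argue the final per-bucket shuffles can also be coupled to agree.

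First I would fix an input and a transcript of \textsc{ShuffleToBuckets} outputs, and condition on the event $\mathcal{F}^c$ that the Stash Shuffle completes without invoking \textbf{FAIL} (either at line~\ref{code:distr:bucket:fail} of Algorithm~\ref{alg:Distribution} or at line~\ref{code:shuffle:drain} / the \textbf{FAIL on} $\neg\textit{stash}.\textit{Empty}()$ check in Algorithm~\ref{alg:StashShuffle}). On this event, I claim that for each output bucket $j$, the multiset of real (non-dummy) items written into the intermediate region for bucket $j$ — spread across the per-input-bucket chunks of size $C$ plus the $K$ drain slots, via the $\textit{output}$ and $\textit{stash}$ mechanics — is exactly the multiset of items that \textsc{DistrBucketIdeal} pushes into $\textit{mid}[j]$ under the same target assignment. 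This is the core structural observation: the stash only reorders \emph{when} within the distribution phase an item destined for bucket $j$ gets emitted, never \emph{whether} or to \emph{which} bucket; the dummies padding each chunk up to $C$ (and the drain up to $K$) are precisely what \textsc{ImportIntermediate} strips out; and the non-failure assumption guarantees every real item actually makes it out (nothing is lost to an overfull stash, and the stash is empty at the end). So after the distribution/import steps, the private \textit{queue} (resp. the ideal \textit{mid}) holds, for each bucket $j$, the same multiset of items.

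Then I would couple the final shuffles: \textsc{ImportIntermediate} shuffles each loaded intermediate bucket and enqueues the survivors; but note the Stash Shuffle's per-bucket intermediate region is itself partitioned across several distribution steps, so the order in which items for bucket $j$ reach the \textit{queue} is some permutation determined by the shuffles. Since the final output for bucket $j$ in the ideal algorithm is a uniformly random permutation of its multiset (the \textsc{Shuffle} call in \textsc{CompressIdeal}), and the \textsc{ImportIntermediate} shuffles together with the queue-ordering also induce a permutation of the same multiset, I can choose the coupling of the remaining randomness (the \textsc{Shuffle} calls) so that the two output orderings coincide — this is possible precisely because both are permutations of the identical multiset and one of the two (ideal) is still free to be set to match. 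Hence under $\mathcal{F}^c$ the outputs are equal, so for every set $T$ of outputs, $|\Pr[\text{SS}\in T] - \Pr[\text{BS}\in T]| \le \Pr[\mathcal{F}]$, which is the claim. Formally I would phrase this as: by the coupling lemma, statistical distance is at most the probability the coupled pair differs, and that is at most $\Pr[\text{Stash Shuffle fails}]$.

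The main obstacle I anticipate is making the second paragraph's multiset-equality argument airtight — in particular verifying that the stash discipline (draining stash$[j]$ into $\textit{output}[j]$ at the start of each \textsc{DistributeBucket}, up to the $C$ cap, in FIFO order, plus the final \textsc{DrainStash}) really does emit every non-dummy item exactly once when no \textbf{FAIL} fires, and that the item-counting bookkeeping ($C$ per chunk, $K$ drain slots, $CB+K$ total per intermediate bucket as in \textsc{ImportIntermediate}) lines up exactly so that no real item is ever overwritten or dropped. This is a careful but routine invariant-tracking exercise: one maintains the loop invariant that at every point the items not yet written to \textit{mid} that are destined for bucket $j$ are exactly those currently in $\textit{output}[j]$ together with those in $\textit{stash}[j]$, and that FAIL is the only way this invariant is ever threatened. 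Granting that, the coupling argument and the resulting statistical-distance bound are immediate from the coupling lemma.
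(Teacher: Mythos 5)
Your proposal is correct and follows essentially the same route as the paper's own proof: condition on the non-failure event, couple the \textsc{ShuffleToBuckets} randomness so the bucket assignments agree, observe that non-failure guarantees each intermediate bucket holds the same multiset of real items in both algorithms, couple the final per-bucket shuffles so the outputs coincide, and conclude via the standard coupling/conditioning bound on statistical distance. Your version simply spells out the multiset-invariant bookkeeping in more detail than the paper does.
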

\begin{proof} Condition on the event that the Stash Shuffle does not fail. We again proceed by a coupling argument. If the outputs of \textsc{ShuffleToBuckets} for both shuffles are identical, the assignment of items to buckets will be the same between the two shuffles. Since the stash does not overflow, and it is drained fully (line~\ref{code:shuffle:drain} of Algorithm~\ref{alg:StashShuffle}), the buckets are perfectly matched. Then, by coupling the outputs of $\textit{Shuffle}(\textit{bucket})$ steps (line~\ref{code:compress:shuffle} of Algorithm~\ref{alg:BucketsShuffle} and line~\ref{code:import:shuffle} of Algorithm~\ref{alg:ImportIntermediate}), we ensure that the outputs of the compression steps are also identical.

By the standard probability theory argument, if two distributions are identical if one of them is conditioned on a certain event not happening, the statistical distance between the two distributions is bounded by the probability of that event.

\end{proof}

The following lemmas form the technical heart of the argument. They bound the probability of each cause of the Stash Shuffle's failing to run to completion: (1) the stash's overflowing (Algorithm~\ref{alg:Distribution}, line~\ref{code:distr:bucket:fail}); (2) the stash's not draining (Algorithm~\ref{alg:StashShuffle}, line~\ref{code:shuffle:distribute:2}); and (3) the compression algorithm's queue overflowing or underflowing.

\begin{lemma}\label{lemma:stash-overflows}Let the total capacity of the stash be $S$. Then the probability that the stash overflows or it fails to drain is bounded by
\[
F_1\leq B^2  e^{(CB/D-1)(2C-S/B)},
\]
subject to additional conditions that $K\geq S/B>2C$ and $e^t<1+(tC-\ln 2)B/D$ where $t=CB/D-1$. 
\end{lemma}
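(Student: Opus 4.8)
The plan is to bound each of the two failure events separately by a union bound over buckets, and then combine. Consider the stash contents partitioned per output bucket: let $X_{j,b}$ denote the number of items that input bucket $b$ sends to output bucket $j$. Each item of input bucket $b$ independently lands in output bucket $j$ with probability $1/B$ (this is the content of \textsc{ShuffleToBuckets}), so $X_{j,b}\sim\mathrm{Bin}(D,1/B)$ with mean $D/B$, and $C=D/B+\alpha\sqrt{D/B}$ is chosen a few standard deviations above the mean. The per-bucket stash occupancy after processing input buckets $0,\dots,b$ is governed by the recursion "add $X_{j,b}$, then remove up to $C$ (or up to $C$ minus what the fresh bucket already contributed, matching the algorithm's order of operations)," which is exactly the queue-length / random-walk process: the stash for bucket $j$ is nonempty only when a partial-sum excess $\sum_{b'\le b}(X_{j,b'}-C)$ has been positive. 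Because $CB/D>1$, the increments $X_{j,b}-C$ have negative drift $t\triangleq CB/D-1>0$ (here I am reading $t$ as I'd define it; the statement's $t=CB/D-1$ matches), so the maximum backlog is exponentially light-tailed.

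First I would prove a tail bound on the maximum over $b$ of the running stash backlog for a fixed output bucket $j$, namely $\Pr[\max_b \sum_{b'\le b}(X_{j,b'}-C) \ge m] \le e^{-\Theta(tm)}$ for appropriate $m$. The cleanest route is a Chernoff/martingale bound: for $\lambda>0$, $e^{\lambda \sum (X_{j,b'}-C)}$ is a supermartingale provided $\mathbb{E}[e^{\lambda(X_{j,b}-C)}]\le 1$, i.e. provided $\lambda$ is small enough that $D\ln(1-1/B+e^{\lambda}/B)\le \lambda C$. The technical side condition in the lemma, $e^t<1+(tC-\ln 2)B/D$, is precisely the inequality certifying that the choice $\lambda=t=CB/D-1$ is admissible (with a factor-$2$ slack, the $\ln 2$, absorbed so the union bound over $B$ buckets goes through). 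Optional-stopping / maximal inequality then gives $\Pr[\text{backlog of bucket } j \text{ ever reaches } S/B] \le 2\,e^{-t\cdot S/B}\cdot e^{2tC}$ or similar — I'd track constants carefully here — which rearranges to the $e^{(CB/D-1)(2C-S/B)}$ form in the statement. The factor $e^{2tC}$ (equivalently the $2C$ shift in the exponent) comes from correcting for the algorithm's bookkeeping: the fresh bucket's own contribution is not drained in the same round, so the effective threshold before a \textbf{FAIL} is $S/B$ offset by up to $2C$.

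Second I would handle the "fails to drain" event: after all $B$ input buckets, \textsc{DrainStash} removes $K=S/B$ more items per bucket, so bucket $j$'s stash is nonempty at the end iff its backlog after step $B$ exceeds $K$. Under $K\ge S/B$ this is dominated by the same maximal-backlog event already bounded (indeed it is a weaker event, since $K\ge S/B$), so no new estimate is needed — just reuse the per-bucket bound. Finally, union-bound over the $B$ output buckets for stash overflow and over the $B$ output buckets for the drain failure: $F_1 \le 2B \cdot e^{(CB/D-1)(2C-S/B)}$, and I'd absorb the factor $2$ into one more power of $B$ (using $B\ge 2$) to match the clean $B^2$ in the statement. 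The hypothesis $S/B>2C$ is what makes the exponent $(CB/D-1)(2C-S/B)$ negative, and $K\ge S/B$ is what lets the drain event be subsumed.

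The main obstacle I expect is getting the constants in the Chernoff exponent exactly right so that the side condition $e^t<1+(tC-\ln 2)B/D$ is genuinely the hypothesis needed — in particular verifying that the MGF bound $\mathbb{E}[e^{t(X_{j,b}-C)}]=e^{-tC}(1-1/B+e^{t}/B)^D \le 1/2$ holds under that hypothesis (the $1/2$ being what we need so the geometric sum in the maximal inequality, summed against the $B$-fold union bound, stays controlled), and then threading the $2C$ offset through the stopping-time argument without losing a factor. The probabilistic independence across input buckets is immediate (the \textsc{ShuffleToBuckets} calls use fresh randomness per bucket), so the only real work is the single-bucket random-walk tail and honest constant-chasing.
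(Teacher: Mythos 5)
Your overall strategy is the paper's: model each per-output-bucket stash as the reflected random walk $X_{i+1}=\max(0,X_i+\mathrm{Bin}(1/B,D)-C)$, bound its exponential moment at $t=CB/D-1$, read the side condition $e^t<1+(tC-\ln 2)B/D$ as certifying $E[e^{t\,\mathrm{Bin}(1/B,D)}]\le \tfrac12 e^{tC}$ (via $(1+(e^t-1)/B)^D\le e^{D(e^t-1)/B}$), set the per-bucket threshold to $S/B$, union-bound, and observe that $K\ge S/B$ makes the drain-failure event a consequence of the overflow event. All of that matches the paper.

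The one step that would fail as written is your maximal inequality. You bound $\Pr[\max_b\sum_{b'\le b}(X_{j,b'}-C)\ge m]$ by a single Doob/optional-stopping application to the prefix sums, but the running maximum of the reflected walk is the maximum of $\sum_{i=k+1}^{n}(X_{j,i}-C)$ over \emph{all} intervals $[k+1,n]$, not just prefixes ($k=0$): the stash can overflow due to an excursion that begins late, after earlier deficits have been truncated at zero. Repairing this requires either a union bound over the $B$ possible excursion starting points $k$, or (the paper's route) a uniform-in-$i$ bound $E[e^{tX_i}]<e^{2tC}$ proved by induction on the recurrence, followed by Markov and a union bound over the $B$ time steps $i$. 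Either fix introduces the second factor of $B$, and that is exactly where the lemma's $B^2$ comes from --- it is a union over (bucket, time) pairs, not a factor of $2B$ padded up to $B^2$. Relatedly, the $e^{2tC}$ term is not an ``algorithmic offset because the fresh bucket is not drained in the same round''; it is the steady-state bound on the moment-generating function, $S_i(t)\le S_{\mathrm{Bin}}(t)\,e^{tC}/(1-e^{-tC}S_{\mathrm{Bin}}(t))<e^{2tC}$, which is where the hypothesis $S_{\mathrm{Bin}}(t)\le\tfrac12 e^{tC}$ (your factor $\tfrac12$) is actually consumed. With those two corrections your argument closes, and the drain event costs nothing extra since $x_B^{(j)}\le S/B\le K$ already implies the stash empties.
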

\begin{proof}
Let the number of items in $\textit{stash}[j]$ before distributing the $i$th bucket be $x_i^{(j)}$, and let $X_i^{(j)}$ be its distribution (for compactness the index $j$ is dropped when it is clear from context). The probability of the stash's overflowing is
\[
F_1=1-\Pr\left[\forall 0\leq i\leq B:\sum_{j=0}^{B-1} x_i^{(j)}\leq S\right].
\]

To bound $F_1$ we observe that the distribution $X_i$ satisfies the following recurrence for all $i$:
\begin{align}
X_0&=0,\notag\\
X_{i+1}&=\max\left(0, X_i+\mathrm{Bin}(1/B,D)-C\right),\label{eq:recurrence}
\end{align}
where $\mathrm{Bin}(\cdot,\cdot)$ is the binomial distribution.

Towards bounding the tails of $X_i$, we define two moment-generating functions as 
\[
S_i(t) = E[e^{tX_i}]\textrm{ and }S_{\mathrm{Bin}}(t) = E[e^{t\cdot \mathrm{Bin}(1/B,D)}].\
\]
These two statements are implied by the recurrence~(\ref{eq:recurrence}):
\begin{align*}
\textrm{if }X_t\leq C,&\textrm{ then } S_{i+1}(t)\leq S_{i}(t)S_{\mathrm{Bin}}(t)\leq e^{tC}S_{\mathrm{Bin}}(t),\\
\textrm{if }X_t> C,&\textrm{ then } S_{i+1}(t)=e^{-tC} S_{i}(t)S_{\mathrm{Bin}}(t),
\end{align*}
which we use in the following bound:
\[
S_{i+1}(t)=E_{x_i\leftarrow X_i}\left[E\left[e^{tX_{i+1}}\mid X_{i}=x_i\right]\right]\\
\leq \left\{e^{tC}+e^{-tC} S_{i}(t)\right\}S_{\mathrm{Bin}}(t).
\]
Fix $t_0$ so that $S_\mathrm{Bin}(t_0)=.5e^{t_0C}$. By an inductive argument it follows that for all $i\geq 0$ and $t< t_0$:
\[
S_i(t)\leq S_{\mathrm{Bin}}(t)\cdot \frac{e^{tC}}{1-e^{-tC}S_{\mathrm{Bin}}(t)}<e^{2tC}.
\]
An upper bound on the moment-generating function implies a bound on the tail probability event for any threshold $\alpha>0$:
\[
\Pr[X_i>\alpha]=\Pr[e^{tX_i}>e^{t\alpha}]\leq e^{-t\alpha}S_i(t).
\]
Thus, the probability that the size of a single $\textit{stash}[j]$ exceeds $\alpha$ is capped by $e^{t(2C-\alpha)}$, which is minimized for $t=t_0$ under the condition that $\alpha>2C$.

Setting the threshold $\alpha=S/B$ and taking the union bound over $B^2$ events $X_i^{(j)}>\alpha$, we obtain the bound on~$F_1:$
\begin{align*}
F_1&\leq \Pr\left[\forall 0\leq i\leq B: x_i^{(j)}\leq S/B\right]\\
&\leq B^2  e^{t_0(2C-S/B)}.
\end{align*}
We note that, under the conditions $x_B^{(j)}\leq S/B\leq K$, all stashes drain, which takes care of the second cause of the shuffle's failure.

To finish the argument we need to compute a lower bound on $t_0$. This is done by using an explit formula for $S_\mathrm{Bin}(t) = \left[ 1 +(e^t-1)/B\right]^D<e^{D(e^t-1)/B}$. To solve $S_\mathrm{Bin}(t)<.5e^{tC}$ for $t$, we observe that this is implied by $t$ satisfying $D(e^t-1)<(tC-\ln 2)B$. The last inequality holds for $t=CB/D-1$ in the regime of interest to us (when $CB/D=1+o(1)$ and $B\ll D$).
\end{proof}

\begin{lemma}\label{lemma:queue-fails}The probability that the compression algorithm fails is bounded by
\[
F_2\leq B\cdot\left\{\exp(-2(DW)^2/N)+\exp(-2Q^2/N)\right\}.
\]
assuming that $L=W\leq B$.
\end{lemma}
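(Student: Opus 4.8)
The plan is to identify precisely the two events that cause the compression phase to abort, bound each by a Hoeffding-type concentration inequality on a sum of Bernoulli random variables, and then take a union bound over buckets. The compression phase maintains a FIFO queue of real (non-dummy) items: each \textsc{ImportIntermediate} call adds all real items from one intermediate bucket, and each \textsc{DrainQueue} call removes exactly $D$ items to write to the output array. The queue \emph{underflows} if a \textsc{DrainQueue} call finds fewer than $D$ items available, and it \emph{overflows} if at some point the queue size exceeds its capacity $D(W-1)+Q$. I would first write down, for the window position $b$, the number of real items $R_b$ imported so far and the number $D\cdot(\text{number of drains})$ removed so far, and observe that the queue occupancy after processing the $b$th import is exactly $\sum_{c\le b} R_c - D\cdot\max(0, b-L+1)$ (with the schedule in Algorithm~\ref{alg:Compress}), so both failure conditions are statements about how far the partial sums $\sum_{c\le b} R_c$ deviate from their mean $D\cdot(b+1)/B \cdot B = D(b+1)$... more precisely from $(b+1)D$ since in expectation each intermediate bucket receives $D$ real items.

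Next I would set up the concentration bound. The key observation is that the total number of real items is exactly $N$, and each of the $N$ input items lands in intermediate bucket $j$ independently with probability $1/B$ (this is the content of \textsc{ShuffleToBuckets}, which the Buckets Shuffle analysis already relies on; the real items in a given intermediate bucket have the same distribution whether or not they transit the stash). Hence $R_b \sim \mathrm{Bin}(N, 1/B)$ and, more usefully, $\sum_{c=0}^{b} R_c$ is a sum of $N$ independent indicators (item $i$ contributes $1$ iff its target bucket is $\le b$), so it is distributed as $\mathrm{Bin}(N, (b+1)/B)$ with mean $(b+1)D'$ where $D' = N/B$. Hoeffding's inequality then gives $\Pr[|\sum_{c\le b} R_c - (b+1)N/B| > \lambda] \le 2\exp(-2\lambda^2/N)$. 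Overflow at window position $b$ requires the prefix sum to exceed its mean by more than roughly $Q$ (the slack beyond the $D(W-1)$ that the drain schedule has already cleared), giving the $\exp(-2Q^2/N)$ term; underflow requires the prefix sum to fall below the amount the drains want to remove, which is a deficit of order $DW = DL$, giving the $\exp(-2(DW)^2/N)$ term. Each of the $B$ values of $b$ contributes one potential overflow and one potential underflow event, so the union bound over $B$ buckets yields
\[
F_2 \le B\cdot\left\{\exp\!\left(-2(DW)^2/N\right) + \exp\!\left(-2Q^2/N\right)\right\}.
\]

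I expect the main obstacle to be the bookkeeping rather than the probability: pinning down exactly which prefix sum corresponds to the queue occupancy at each of the three loop phases of Algorithm~\ref{alg:Compress} (the initial fill of $L$ imports, the interleaved drain-then-import middle phase, and the final $L$ drains), verifying that the worst-case underflow margin is exactly $DW$ and the worst-case overflow margin is exactly $Q$, and checking the boundary cases at $b=0$ and $b=B-1$. Once the occupancy is expressed as $\mathrm{Bin}(N,(b+1)/B) - (\text{deterministic drain count})$, the two-sided Hoeffding bound and the union bound are routine. The hypothesis $L = W \le B$ is used precisely to guarantee that the drain schedule in Algorithm~\ref{alg:Compress} is well-formed (no bucket is drained before it is imported, and all $B$ buckets are eventually drained), so that the occupancy formula above is valid.
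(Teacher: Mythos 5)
Your proposal is correct and follows essentially the same route as the paper: it expresses the queue occupancy via the prefix counts $Y_i=\mathrm{Bin}(i/B,N)$ of real items in the first $i$ intermediate buckets (justified, as in the paper, by noting the stash does not change which bucket an item lands in), bounds underflow below $D(i-W)$ and overflow above $Di+Q$ by one-sided Hoeffding bounds, and takes a union bound over the $B$ window positions. The paper packages the "same distribution as the ideal shuffle" observation as an explicit hybrid shuffle, but the substance is identical to your parenthetical remark.
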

\begin{proof}Consider the following shuffle, which is a hybrid between the Stash Shuffle and the Buckets Shuffle. It follows the Buckets Shuffle in the distribution stage, and switches to the Stash Shuffle for the compression step.

Concretely, the compression algorithm of the hybrid shuffle works as follows. It reads one bucket at a time, deposits $D$ elements, while keeping $WD+Q$ elements in memory to absorb variability in the fill quotient among the buckets.

Conditional on reaching the compression step, the failure probability of the hybrid shuffle and the Stash Shuffle are identical. Thus, it suffices to analyze the failure the probability of the hybrid shuffle, which we do below.

Define the total number of items in the first $i$ buckets as $Y_i$, where $Y_0=0$ and $Y_B=N$. The probability of the compression's failing is thus bounded by
\begin{align*}
F_2&=1-\Pr[\forall W\leq i\leq B\colon 0\leq Y_i-D(i-W)\leq WD+Q],\\
&\leq \sum_{i=W}^B \Pr[Y_i<D(i-W)]+\Pr[Y_i>Di+Q].
\end{align*}

We observe that $Y_i=\mathrm{Bin}(i/B, N)$ (this is where we use the fact that the hybrid's shuffle distribution stage never fails). Recall that $D=N/B$, and thus $E[Y_i]=iD$.

The tails of the binomial distribution are bounded as
\begin{align*}
\Pr\left[\mathrm{Bin}(i/B, N)<D(i-W)\right]&\leq \exp(-2(DW)^2/N),\\
\Pr[\mathrm{Bin}(i/B, N)>Di+Q]&\leq \exp(-2Q^2/N),
\end{align*}
which implies the claim.
\end{proof}

\begin{theorem}The statistical distance between the output of the Stash Shuffle and the uniform distribution is bounded by 
\[
B^2 \exp\left\{(CB/D-1)(2C-S/B)\right\}+B\cdot\left\{\exp(-2(DW)^2/N)+\exp(-2Q^2/N)\right\},
\]
assuming $K\geq S/B>2C$, $W\leq B$ and $e^t<1+(tC-\ln 2)B/D$ where $t=CB/D-1$. 

\end{theorem}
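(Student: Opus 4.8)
The proof of the main theorem is essentially a synthesis of the three lemmas already established, so the plan is to chain them together and then bound the total failure probability by a union bound.

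\medskip

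\noindent\textbf{Step 1: Reduce to failure probability.} First I would invoke Lemma~\ref{lemma:BS-uniform}, which tells us the output of the Buckets Shuffle is uniform on permutations of $N$ items. Then by Lemma~\ref{lemma:SS-BS}, the statistical distance between the Stash Shuffle's output distribution and the Buckets Shuffle's output distribution is at most $\Pr[\text{Stash Shuffle fails}]$. By the triangle inequality for statistical distance (and since the Buckets Shuffle output is exactly uniform, the distance from uniform is zero), the statistical distance between the Stash Shuffle output and the uniform distribution is bounded by $\Pr[\text{Stash Shuffle fails}]$.

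\medskip

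\noindent\textbf{Step 2: Decompose the failure event.} The Stash Shuffle can fail for exactly the three reasons enumerated before Lemma~\ref{lemma:stash-overflows}: the stash overflows during distribution (Algorithm~\ref{alg:Distribution}, line~\ref{code:distr:bucket:fail}); the stash fails to drain (Algorithm~\ref{alg:StashShuffle}, line~\ref{code:shuffle:distribute:2}); or the compression queue overflows or underflows. By a union bound, $\Pr[\text{Stash Shuffle fails}] \leq F_1 + F_2$, where $F_1$ covers the first two causes and $F_2$ the third. Here I should be slightly careful: $F_2$ as stated in Lemma~\ref{lemma:queue-fails} is derived for a hybrid shuffle, but the lemma already argues that conditional on reaching the compression step the two failure probabilities coincide, so the bound transfers directly.

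\medskip

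\noindent\textbf{Step 3: Substitute the bounds.} Now I would plug in $F_1 \leq B^2 e^{(CB/D-1)(2C-S/B)}$ from Lemma~\ref{lemma:stash-overflows} and $F_2 \leq B\cdot\{\exp(-2(DW)^2/N)+\exp(-2Q^2/N)\}$ from Lemma~\ref{lemma:queue-fails}, carrying along their hypotheses: $K\geq S/B>2C$ and $e^t<1+(tC-\ln 2)B/D$ with $t=CB/D-1$ for the first lemma, and $L=W\leq B$ for the second. Adding the two gives exactly the stated bound.

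\medskip

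\noindent There is essentially no obstacle here; the theorem is a corollary of the preceding lemmas. The only subtlety worth a sentence is making sure the hypotheses of the two lemmas are mutually compatible (they are, since they constrain disjoint parameters) and that the union bound over the failure causes is legitimate — i.e.\ that these are genuinely the only ways the algorithm can abort, which is the content of the remark introducing the technical lemmas.
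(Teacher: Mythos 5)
Your proposal is correct and follows exactly the paper's own (one-line) proof: combine Lemma~\ref{lemma:BS-uniform} and Lemma~\ref{lemma:SS-BS} to reduce the statistical distance to the failure probability, then collect the bounds $F_1$ and $F_2$ from Lemmas~\ref{lemma:stash-overflows} and~\ref{lemma:queue-fails} via a union bound. Your added care about the hypotheses and the exhaustiveness of the failure cases is a faithful elaboration of the same argument.
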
\label{th:main}
\begin{proof}
Follows by combining Lemmas~\ref{lemma:BS-uniform} and~\ref{lemma:SS-BS}, and by collecting the bounds on $F_1$ and $F_2$ (Lemmas~\ref{lemma:stash-overflows} and~\ref{lemma:queue-fails}).
\end{proof}

A simplified, asymptotic statement of Theorem~\ref{th:main} is given by the next corollary. We omit computational assumptions on security of the encryption; the rest of the argument provides unconditional security.

\begin{corollary} There is an oblivious shuffle on $N$ items with $N^{1/2+o(1)}$ private memory whose output distribution is distance $\textrm{negl}(1/N)=N^{-\omega(1)}$ from the uniform.
\end{corollary}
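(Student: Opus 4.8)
The plan is to instantiate Theorem~\ref{th:main}---that is, the Stash Shuffle of Algorithm~\ref{alg:StashShuffle}---with a concrete choice of parameters sitting a polylogarithmic factor above $\sqrt N$, and then to check that (i) the memory footprints of the two phases are both $N^{1/2+o(1)}$ and (ii) each of the three failure terms in the bound is $N^{-\omega(1)}$. Fix $g=g(N)=(\log N)^2$ and put $B=\lceil\sqrt N/g\rceil$, so that $D=\lceil N/B\rceil=\Theta(\sqrt N\,g)$ and $D/B=\Theta(g^2)$. Keep $\alpha=2$ (a constant), so $C=\lceil D/B+\alpha\sqrt{D/B}\,\rceil=\Theta(g^2)=N^{o(1)}$. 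Choose the stash to clear the $2C$ threshold with a constant factor to spare, $S=\lceil 3BC\rceil$, whence $K=\lceil S/B\rceil=\Theta(C)=N^{o(1)}$ and $S=\Theta(\sqrt N\,g)$. Finally take the trivial window $W=1$ (so $L=\min(W,B)=1\le B$ and the $D(W-1)$ summand disappears) and the hedge $Q=D=\Theta(\sqrt N\,g)$.

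First, the memory. The distribution phase keeps $D+S/B=\Theta(\sqrt N\,g)$ items, and the compression phase keeps $CB+K+D(W-1)+Q=\Theta(\sqrt N\,g)+N^{o(1)}+0+\Theta(\sqrt N\,g)=\Theta(\sqrt N\,g)$ items; since $g=(\log N)^2=N^{o(1)}$, the private memory is $\Theta(\sqrt N\,(\log N)^2)=N^{1/2+o(1)}$. Next, the hypotheses of Theorem~\ref{th:main}: $K\ge S/B>2C$ and $W\le B$ are immediate. For the condition $e^t<1+(tC-\ln 2)B/D$ with $t=CB/D-1$, observe that $B/D=\Theta(g^{-2})\to 0$, so $t=\alpha\sqrt{B/D}\,(1+o(1))\to 0$; expanding both sides to second order in $\sqrt{B/D}$ (the $\lceil\cdot\rceil$-rounding of $C$ contributes equally to both sides and cancels), the inequality reduces to $\tfrac12\alpha^2>\ln 2$, true for $\alpha=2$ once $N$ is large.

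It then remains to bound the conclusion of Theorem~\ref{th:main}. For the stash overflow/drain term (Lemma~\ref{lemma:stash-overflows}) we have $2C-S/B\le -C$ and $tC=\Theta(\sqrt{D/B})=\Theta(g)$, so $B^2\exp\{(CB/D-1)(2C-S/B)\}\le N\exp\{-tC\}=N\exp\{-\Theta((\log N)^2)\}=N^{-\omega(1)}$. For the compression terms (Lemma~\ref{lemma:queue-fails}), $(DW)^2/N=Q^2/N=\Theta(g^2)=\Theta((\log N)^4)$, so $B\bigl(\exp(-2(DW)^2/N)+\exp(-2Q^2/N)\bigr)\le N\exp\{-\Theta((\log N)^4)\}=N^{-\omega(1)}$. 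Summing these, the statistical distance to the uniform distribution is $N^{-\omega(1)}=\textrm{negl}(1/N)$ (this is the unconditional part of the guarantee; the re-encryptions are treated as perfectly hiding, as in the corollary's statement). Obliviousness holds by inspection exactly as in the paper: every data- or randomness-dependent branch (fullness/emptiness tests, dummy detection, the internal shuffles) runs on private memory, and the sequence of reads and writes to $\textit{in}$, $\textit{mid}$, $\textit{out}$ is a fixed function of the public parameters $N,B,D,C,S,K,W,Q$, conditioned on not failing---an event already folded into the distance bound.

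I expect the only subtle point to be the transcendental condition $e^t<1+(tC-\ln 2)B/D$. The tempting balanced choice $B=D=\sqrt N$ makes $t$ a positive constant, and then $e^t$ overtakes the right-hand side for every constant $\alpha$, so that choice is inadmissible; one is forced to take $B$ a super-constant factor below $\sqrt N$ (here $(\log N)^2$) so that $t\to 0$, which is precisely why the private memory comes out as $N^{1/2+o(1)}$ rather than $O(\sqrt N)$. Everything else is routine parameter bookkeeping.
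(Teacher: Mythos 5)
Your proof is correct and follows essentially the same route as the paper: a direct instantiation of Theorem~\ref{th:main} with $B$ a superconstant (but $N^{o(1)}$) factor below $\sqrt N$, $W=1$, and $Q=\Theta(D)$, followed by verification of the hypotheses and of the negligibility of each failure term. The only difference is cosmetic---you take a concrete polylogarithmic gap $B=\sqrt N/(\log N)^2$ and an additive cap $C=D/B+2\sqrt{D/B}$, whereas the paper uses the one-parameter family $B=N^{1/2-\epsilon}$, $C=(1+\epsilon)N^{2\epsilon}$ with $\epsilon\gg\ln\ln N/\ln N$; your closing remark correctly identifies why $t=CB/D-1\to 0$ (hence $B=o(\sqrt N)$) is forced by the transcendental condition.
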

\begin{proof}
Consider the Stash Shuffle with the following parameters: $B=N^{1/2-\epsilon}$, $D=N^{1/2+\epsilon}$, $C = (1+\epsilon)N^{2\epsilon}$, $S=N^{1/2+2\epsilon}$, $K=N^{3\epsilon}$, $W=1$, and $Q=N^{1/2+\epsilon}$. According to Theorem~\ref{th:main}, the distance between the shuffle's output distribution and the uniform is bounded by
\[
N^{1-2\epsilon}\exp\{-(1+\epsilon)N^{3\epsilon}\}+2N^{1/2-\epsilon}\exp\left\{-N^{2\epsilon}\right\}.
\]
(The theorem's assumptions hold; the only one that requires verification is that $t=CB/D-1=\epsilon$ and $e^\epsilon\leq 1+(\epsilon(1+\epsilon)N^{2\epsilon}-\ln 2)N^{-2\epsilon}$, which holds for $\epsilon<1$ and $N^{-2\epsilon}\ln 2\ll \epsilon^2$).

As long as $1\gg \epsilon\gg \ln\ln N/\ln N$, the distance will decay faster than a negligible function in $1/N$, and the private memory size will approach $N^{1/2}$.
\end{proof}

\section{Sample Parameters}\label{s:parameters}

Table~\ref{tab:stash-shuffle} lists the Stash Shuffle's parameters for several select scenarios. Rather than applying the generic bound of our main theorem, we use a tighter estimate of the shuffle's security level (equivalently, the failure probability of the Stash Shuffle, according to Lemma~\ref{lemma:SS-BS}). The more precise bounds follow by computing the tail probabilities of the distributions $X_i$ and $Y_i$ in Lemmas~\ref{lemma:stash-overflows} and~\ref{lemma:queue-fails} respectively.

\begin{table}[htbp]
  \centering
  \small
\begin{tabular}{r|r|r|r|r|r|r||r}
\multicolumn{1}{c|}{$N$} &
    \multicolumn{1}{c|}{$B$} &
	    \multicolumn{1}{c|}{$D$} &
    	    \multicolumn{1}{c|}{$C$} &
        	    \multicolumn{1}{c|}{$W$} &
            	    \multicolumn{1}{c|}{$S$} &
 	            	   \multicolumn{1}{c||}{$Q$} &
                    		\multicolumn{1}{c|}{\textbf{log($\epsilon$)}}\\ \hline
10M & 1,000 & 10,000 & 25 & 2 & 40,000 & 18,000 &-80.1\\ 
50M & 2,000 & 25,000 & 30 & 2 & 86,000 & 40,000 &-81.8 \\ 
100M & 3,000 & 33,334 & 30 & 2 & 117,000 & 57,000 &-81.9\\ 
200M & 4,400 & 45,455 & 24 & 2 & 170,000 & 73,000 &-64.5\\
\end{tabular}
\caption{Stash Shuffle parameter scenarios and their security.\vspace*{1.5ex}}
\label{tab:stash-shuffle}
\end{table}


\begin{thebibliography}{}
\bibitem{StashShuffler}
{Andrea Bittau, {{\'{U}}}lfar Erlingsson, Petros Maniatis, Ilya Mironov, Ananth Raghunathan, David Lie,
Mitch Rudominer, Ushasree Kode, Julien Tinnes, Bernhard Seefeld.}
\newblock ``\textsc{Prochlo}: Strong Privacy for Analytics in the Crowd.''
\newblock SOSP 2017.
\end{thebibliography}
\end{document}